\documentclass[a4paper]{easychair}
\usepackage{clrscode3e}
\usepackage{graphicx}
\usepackage{hyperref}
\usepackage{relsize}
\usepackage{xspace}
\usepackage{amsmath}
\usepackage{amssymb}
\usepackage{multirow}
\usepackage{xcolor}
\usepackage{latexsym}
\usepackage{mathpartir}
\usepackage{tikz}
\usetikzlibrary{arrows}
\usetikzlibrary{positioning,calc}
\usetikzlibrary{fit,shapes}
\usepackage{thmtools, thm-restate}
\usepackage[hang,raggedright]{subfigure}
\usepackage{comment}

\newif\iflongversion\longversiontrue
\iflongversion
\includecomment{longversion}
\excludecomment{shortversion}
\else
\excludecomment{longversion}
\includecomment{shortversion}
\fi

\pagestyle{headings}

\newcommand\si{SMTInterpol\xspace}

\newcommand\mT{\mathcal{T}}
\newcommand\T{$\mT$\xspace}

\newcommand\store[3]{\ensuremath{#1\langle #2\lhd#3\rangle}\xspace}
\newcommand\select[2]{\ensuremath{#1[#2]}\xspace}
\newcommand\strongeq{\ensuremath{\sim}\xspace}

\newcommand\Weakeq[1]{\ensuremath{\mathrm{WeakEQ}(#1)}}
\newcommand\weakeqi[1][i]{\ensuremath{\approx_{#1}}\xspace}
\newcommand\Weakeqi[2][i]{\ensuremath{\mathrm{WeakEQ}_{#1}(#2)}}
\newcommand\weakpath[1]{\ensuremath{\stackrel{(#1)}{\Leftrightarrow}}\xspace}

\newcommand\weakcongi[1][i]{\ensuremath{\sim_{#1}}}

\newcommand\ta{\ensuremath{\mathcal{T}_A}\xspace}
\newcommand\ti{\ensuremath{\mathcal{T}_I}\xspace}
\newcommand\te{\ensuremath{\mathcal{T}_E}\xspace}
\newcommand\M{\ensuremath{\mathcal{M}}\xspace}
\newcommand\ST{\ensuremath{\mathcal{S}}\xspace}
\newcommand\A{\ensuremath{\mathcal{A}}\xspace}
\newcommand\Stores[1]{\ensuremath{\mathop{\mathrm{Stores}}\left(#1\right)}}

\newcommand\First{\ensuremath{\mathit{Fst}}}
\newcommand\Second{\ensuremath{\mathit{Snd}}}

\newcommand\fresh[1]{\ensuremath{\ulcorner #1\urcorner}}
\newcommand\Cond[1]{\ensuremath{\mathop{\mathrm{Cond}}(#1)}}
\newcommand\Condi[2][i]{\ensuremath{\mathop{\mathrm{Cond}_{#1}}(#2)}}

\renewcommand{\Then}{%
  \textbf{then}\stepcounter{indent} }
\renewcommand{\Else}{%
  \kill\addtocounter{indent}{-1}%
  \liprint\textbf{else}\>\>\stepcounter{indent}}

\newcommand{\type}[1]{\textsc{#1}}

\hyphenation{in-ter-po-la-tion}

\def\ourtitle{Weakly Equivalent Arrays}

\newtheorem{lemma}{Lemma}

\theoremstyle{definition}
\newtheorem{definition}{Definition}
\newtheorem{example}{Example}

\title{\ourtitle}

\hypersetup{pdfauthor={J{\"u}rgen Christ, Jochen Hoenicke},
  pdftitle={\ourtitle}} 

\author{J{\"u}rgen Christ \and Jochen Hoenicke\thanks{This work is supported
    by the German Research Council (DFG) as part of the Transregional
    Collaborative Research Center ``Automatic Verification and Analysis of
    Complex Systems'' (SFB/TR14 AVACS)}}

\institute{Department of Computer Science, \\ University of
  Freiburg\\ \email{\{christj,hoenicke\}@informatik.uni-freiburg.de}}

\begin{document}
\maketitle

\begin{abstract}
  The (extensional) theory of arrays is widely used to model systems.  Hence,
  efficient decision procedures are needed to model check such systems.
  Current decision procedures for the theory of arrays saturate the
  read-over-write and extensionality axioms originally proposed by McCarthy.
  Various filters are used to limit the number of axiom instantiations while
  preserving completeness.  We present an algorithm that lazily instantiates
  lemmas based on \emph{weak equivalence classes}.  These lemmas are
  easier to interpolate as they only contain existing terms.  We formally
  define weak equivalence and show correctness of the resulting decision
  procedure.
\end{abstract}

\section{Introduction}
Arrays are widely used to model parts of systems.  In software model checking,
for example, the heap of a program can be modelled by an array that represents
the main memory.  A software model checker using such a model can check for
illegal accesses to memory or even memory leaks.  While checking for illegal
accesses can be done using only the axioms proposed by McCarthy, leak checking
typically is done using extensionality.  In this setting, extensionality is
used to ensure that the memory after executing a program does not contain more
allocated memory cells than it contained at the beginning of the program.

The theory of arrays was initially proposed by
McCarthy~\cite{DBLP:conf/ifip/McCarthy62}.  It specifies two operations: (1)
The store operation \store{a}{i}{v} creates a new array that stores at every
index different from $i$ the same value as array $a$ and the value $v$ at
index $i$.  (2) The select operation \select{a}{i} retrieves the value of
array $a$ at position $i$.  The theory is parametric in the index and element
theories.

The store operation only modifies an array at one index.  The values stored at
other indices are not affected by this operation.  Hence, the resulting
array and the array used in the store operation are \emph{weakly equal} 
in the sense that
they differ only at finitely many indices.
Current decision procedures do not fully exploit such dependencies between
arrays.  Instead, they use a series of instantiations of the axiom proposed by
McCarthy to derive weak equivalences.

In this paper we present a new algorithm to decide the quantifier-free
fragment of the theory of arrays.  The decision procedure is based on the
notion of \emph{weak equivalence}, a property that combines equivalence
reasoning with array dependencies.  The new algorithm only produces a few new
terms not present in the input formula during preprocessing.  This is possible
since the decision procedure does not instantiate the axiom proposed by
McCarthy, but axioms derived from them.

\paragraph*{Related Work}
Since the proposal of the theory of arrays by
McCarthy~\cite{DBLP:conf/ifip/McCarthy62} several decision procedures have
been proposed.  We can identify two basic branches: \emph{rewrite-based} and
\emph{instantiation-based} techniques.

Armando et al.~\cite{DBLP:journals/tocl/ArmandoBRS09} used
rewriting techniques to solve the theory of arrays.  They showed how to
construct simplification orderings to achieve completeness.  The benchmarks
used in this paper test specific properties of the array operators like
commutativity of stores if the indices differ.  While these benchmarks require
a lot of instantiations of McCarthy's axioms, they are easy for the decision
procedure presented in this paper since the properties tested by these
benchmarks are properties satisfied by the weak equivalence relation presented
in this paper.

Bruttomesso et al.~\cite{DBLP:journals/corr/abs-1204-2386} present a rewrite
based decision procedure to reason about arrays.  This approach exploits some
key properties of the store operation that are also captured by the weak
equivalence relation described in this paper.  Contrary to our method, the
rewrite based approach is not designed for Nelson--Oppen style theory
combination and thus not easily integratable into an existing SMT solver.
They extended the solver into an interpolating solver for computing
quantifier-free interpolants.  In contrast to our method their solver depends
on the partitioning of the interpolation problem.  We create a SMT proof
without any knowledge of the partitioning and can use proof tree preserving
interpolation~\cite{DBLP:conf/tacas/ChristHN13}, which only requires a
procedure to interpolate the lemmas.

A decision procedure for the theory of arrays based on instantiating
McCarthy's axioms is given by de Moura et
al.~\cite{DBLP:conf/fmcad/MouraB09}.  The decision procedure saturates several
rules that instantiate array axioms under certain conditions.  Several filters
are proposed to minimise the number of instantiations.

Closest to our work is the decision procedure published by Brummayer et
al.~\cite{DBLP:journals/jsat/BrummayerB09}.  Their decision procedure produces
lemmas that can be derived from the axioms for the theory of arrays proposed
by McCarthy.  They consider the theory of arrays with bitvector indices and
prove soundness and completeness of their approach in this setting.  In
contrast to our method, they do not allow free function symbols (i.~e., the
combination of the theory of arrays with the theory of uninterpreted function
symbols) since they only consider a limited form of extensionality where the
extensionality axiom is only instantiated for arrays $a$ and $b$ if the
formula contains the literal $a\neq b$.  We do not have this limitation, but
add some requirements on the index theory that prevent the procedure presented
in this paper from using the theory of bitvectors as index theory.

\section{Notation}\label{sec:notation}
A first order theory consists of a signature $\Sigma$ and a set of models
$\mathbb{M}$.  We assume the equality symbol $=$ with its usual interpretation
is part of any signature.  Every model contains for every sort interpreted by
this model a non-empty domain and a mapping from constant or function symbol
into the corresponding domain.  A theory \T is \emph{stably infinite} if and
only if every satisfiable quantifier-free formula is satisfied in a model of
\T with an infinite universe.

The theory of arrays \ta is parameterised by an index theory \ti and an
element theory \te.  The signature of \ta consists of the two functions
\select{\cdot}{\cdot} and \store{\cdot}{\cdot}{\cdot}.  Every model of the
theory of arrays satisfies the select-over-store-axioms proposed by
McCarthy~\cite{DBLP:conf/ifip/McCarthy62}:
\begin{align}
  \forall a\,i\,v.\ \select{\store{a}{i}{v}}{i} & = v\tag{idx}\label{ax:idx}\\
  \forall a\,i\,j\,v.\ i\neq j\implies\select{\store{a}{i}{v}}{j} & =
  \select{a}{j}\tag{read-over-write}\label{ax:read-over-write}
\end{align}
Additionally we consider the extensional variant of the theory of arrays.
Then, every model has to satisfy the extensionality axiom:
\begin{align}
  \forall a\,b.\ a = b \lor\exists i.\ \select{a}{i} \neq
  \select{b}{i}\tag{ext}\label{ax:ext}
\end{align}

We use $a,b$ to denote array-valued variables, $i,j,k$ to denote index
variables, and $v,w$ to denote element variables.  Additionally we use
subscripts to distinguish different variables.  We use $P$ to denote a
path in a graph.  A path in a graph is interpreted as a sequence of
edges.

In the remainder of this paper, we consider quantifier-free \ta-formulae.
Furthermore we fix the index \ti to a stably infinite theory and the element
theory \te to a theory that contains at least two different
values\footnote{Note that \ta is stably infinite under these conditions.}.

\section{Towards a Nelson--Oppen-based Array Solver}
Multiple theories are usually combined with a variant of the Nelson--Oppen
combination procedure~\cite{DBLP:journals/toplas/NelsonO79}.  The procedure
requires the participating theories to be stably infinite and to only share
the equality symbol $=$.

The procedure first transforms the input such that every literal is
\emph{pure} with respect to the theories.  Let $f(t)$ be a term in the
input.  If $f$ is interpreted by theory $\mT_1$ and $t$ is interpreted
by theory $\mT_2$, then $f(t)$ is not pure.  The first step of the
Nelson--Oppen procedure then generates a fresh variable $v$, rewrites
$f(t)$ into $f(v)$, and adds the definition $v=t$ as a new conjunct to
the formula.  The fresh variable is shared between theories $\mT_1$
and $\mT_2$.  This step is repeated until all terms are pure.  By
abuse of notation, we name the shared variable after its
defining term $t$, e.\,g., we use $\select{a}{i}$ to denote the shared
variable that is defined as $\select{a}{i}$.

Let $V$ be the set of fresh variables introduced in the first step of the
combination procedure.  The second step of the procedure tries to find an
\emph{arrangement} of $V$, i.~e., an equivalence relation between variables in
$V$ such that $\mT_1$ and $\mT_2$ produce partial models that agree with this 
equivalence relation.  Finding such an arrangement is typically done by
propagating equalities or providing case split lemmas.  In the following, we
call this arrangement strong equivalence to distinguish it from weak
equivalence defined in the next section.  We write $a \strongeq b$ to
denote that $a$ and $b$ are strongly equivalent, i.e., that in the current
arrangement the shared variables $a$ and $b$ are equal.

For the theory of arrays, we consider every term of the form
$\store{\cdot}{\cdot}{\cdot}$ or $\select{\cdot}{\cdot}$ as being
interpreted by the array theory.  
We consider all array terms, store, and select terms to be shared and 
thus they have to occur in the arrangement. 
Furthermore, every index term that
appears in a store or select is considered shared between the array
theory and the index theory.  Then the goal is to find a suitable
arrangement to these shared terms such that all theories agree on this
arrangement.

For an array solver to be used in Nelson--Oppen combination we have to
propagate equalities between shared array terms and shared select terms.
Furthermore, the other theories have to propagate equalities between terms
used as index in a select or store.  In the remainder of this paper we will
first show how to propagate equalities between select terms and afterwards
deal with extensionality to propagate equalities between array-valued terms.

\section{Weak Equivalences over Arrays}
The theory of arrays has two constructors for arrays: array variables, and
store terms $\store{\cdot}{\cdot}{\cdot}$.  Assuming quantifier-free input, we can only
constrain the values of a finite number of indices.  These constraints can
either be explicity like $\select{a}{i}=v$, or implicit like $\store{a}{i}{v}$
where axiom (\ref{ax:idx}) produces the corresponding
$\select{\store{a}{i}{v}}{i}=v$.  Hence, for quantifier-free input, arrays
that are connected via a sequence of $\store{\cdot}{\cdot}{\cdot}$ can only
differ in finitely many positions.  We call such arrays \emph{weakly
  equivalent}.  In this section we formally define weak equality and show how
to exploit this to produce a decision procedure for the (extensional) theory of
arrays.

Let \ST be the set of all terms of the form \store{\cdot}{\cdot}{\cdot} in the
input formula and \A be the set of all array-valued terms that are not in \ST.
Since $\store{a}{i}{v}$ modifies $a$ only at index $i$, these two arrays are
guaranteed to be equal on all indices except on index $i$.  We generalise this
observation to chains of the form $\store{\store{\ldots}{j}{w}}{i}{v}$ to
extract a set of indices for which two arrays might store different values.

\begin{definition}[weak equivalence]\label{def:weakeq}
A \emph{weak equivalence graph} $G^W$ contains vertices
$\ST\cup\A$ and undirected edges defined as follows:
\begin{enumerate}
\item $a\leftrightarrow b$ if $a\strongeq b$, and
\item $a\stackrel{i}{\leftrightarrow} b$ if $a$ has form
  \store{b}{i}{\cdot}.
\end{enumerate}
We write $a\weakpath{P} b$ if there exists a path $P$ between nodes $a$ and
$b$ in $G^W$.  In this case, we call $a$ and $b$ \emph{weakly equal}.  The weak
equivalence class containing all elements that are weakly equal to $a$ is
defined as $\Weakeq{a} := \{b\ |\ \exists P.\ a\weakpath{P} b\}$.
\end{definition}
For a path $P$ we define $\Stores{P}$ as the set of all indices corresponding
to edges of the form $\stackrel{\cdot}{\leftrightarrow}$, i.~e.,
$\Stores{P}:=\{i\ |\ \exists a\,b.\ a\stackrel{i}{\leftrightarrow} b\in P\}$.

\begin{example}\label{ex:weakeq}
  Consider the formula $a=\store{b}{j}{v}\land b=\store{c}{i}{w}\land d=e\land
  \select{c}{i}=w$.  The weak equivalence graph for this example is shown in
  Figure~\ref{fig:weakeq}.  Note that the last conjunct is not important for
  the construction of the weak equivalence graph.
\begin{figure}[htpb]
  \centering
  \begin{tikzpicture}[<->,node distance=1.5cm]
    \node (d) at (0,0) {$d$};
    \node[right of=d] (e) {$e$};
    \node[above of=d,node distance=.5cm] (b) {$b$};
    \node[right of=b] (storec) {$\store{c}{i}{w}$};
    \node[right of=storec] (c) {$c$};
    \node[left of=b] (storeb) {$\store{b}{j}{v}$};
    \node[left of=storeb] (a) {$a$};
    \draw (d) -- (e);
    \draw (a) -- (storeb);
    \draw (storeb) --node[above]{$j$} (b);
    \draw (b) -- (storec);
    \draw (storec) --node[above]{$i$} (c);
  \end{tikzpicture}
  \caption{\label{fig:weakeq}Weak Equivalence Graph for
    Example~\ref{ex:weakeq}}
\end{figure}

  We get two different weak equivalence classes.  The first one contains the
  nodes $a$, $\store{b}{j}{v}$, $b$, $\store{c}{i}{w}$, and $c$.  The second
  contains $d$ and $e$.  Note that $d$ and $e$ are actually strongly equivalent.
  Thus, they store the same value at every position.  Let $P$ denote the path
  from $a$ to $c$ in the weak equivalence graph.  Then, $\Stores{P}=\{i,j\}$.
  Thus, arrays $a$ and $c$ can only differ in at most the values stored at the
  indices $i$ and $j$.
\end{example}

If we want to know if $\select{a}{i}$ and $\select{b}{i}$ should be equal, we
check if $a\weakpath{P}b$ for a path $P$ such that $i\not\in\Stores{P}$.  If
this is the case, $P$ witnesses the equivalence between the select terms.

\begin{definition}[weak equivalence modulo $i$]\label{def:weakeqi}
  Two arrays $a$ and $b$ are \emph{weakly equivalent modulo~$i$} if and only if
  they are weakly equivalent and connected by a path that does not contain an
  edge of the form $\stackrel{j}{\leftrightarrow}$ where $j\strongeq i$.  We
  denote weak equivalence modulo~$i$ by $a\weakeqi b$ and define it as
  $a\weakeqi b := \exists P.\ a\weakpath{P} b\land \forall
  j\in\Stores{P}.\ j\not\strongeq i$.
\end{definition}

Using this definition we can propagate equalities between shared selects if
the arrays are weakly equivalent modulo the index of the select.

\begin{lemma}[read-over-weakeq]\label{lem:read-over-weakeq}
  Let $\strongeq$ be an arrangement satisfying the array axioms.
  Let $\select{a}{i}$ and $\select{b}{j}$ be two selects such that $i\strongeq
  j$ and $a\weakeqi b$.  Then, $\select{a}{i} \strongeq \select{b}{j}$ holds.
\end{lemma}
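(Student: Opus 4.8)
The plan is to prove this by induction on the length of a path $P$ witnessing $a \weakeqi b$, establishing the stronger claim that for any such path, $\select{a}{i} \strongeq \select{c}{i}$ for every node $c$ on the path (where we reuse the shared variable $\select{c}{i}$; note that since all index terms appearing in selects are shared and $i \strongeq j$, it suffices to prove $\select{a}{i} \strongeq \select{b}{i}$ and then use transitivity of $\strongeq$ with the arrangement's treatment of $\select{b}{i} \strongeq \select{b}{j}$, which follows from $i \strongeq j$ and congruence).

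First I would set up the induction. For the base case, the path has length $0$, so $a$ and $b$ are the same node and the claim is trivial. For the inductive step, let $P$ be a path of length $n+1$ from $a$ to $b$, and let $c$ be the penultimate node, so $P = P' \cdot e$ where $P'$ is a path from $a$ to $c$ of length $n$ not using any edge labelled $j \strongeq i$, and $e$ is the final edge from $c$ to $b$, also not labelled by any such index. By the induction hypothesis, $\select{a}{i} \strongeq \select{c}{i}$. It remains to show $\select{c}{i} \strongeq \select{b}{i}$ by a case analysis on the type of edge $e$.

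The two cases follow Definition~\ref{def:weakeq}. If $e$ is an edge of the form $c \leftrightarrow b$ arising from $c \strongeq b$, then by congruence (the arrangement respects congruence since it satisfies the theory axioms) we get $\select{c}{i} \strongeq \select{b}{i}$ directly. If $e$ is an edge of the form $c \stackrel{k}{\leftrightarrow} b$, then one of $c,b$ has the form $\store{\cdot}{k}{\cdot}$ applied to the other; say (the other subcase is symmetric) $c = \store{b}{k}{v}$. Since $e$ does not carry a label strongly equal to $i$, we have $k \not\strongeq i$, hence $i \neq k$ in the arrangement, so by the \eqref{ax:read-over-write} axiom instance $\select{\store{b}{k}{v}}{i} = \select{b}{i}$ we conclude $\select{c}{i} \strongeq \select{b}{i}$. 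Chaining with the induction hypothesis and then with $\select{b}{i} \strongeq \select{b}{j}$ (from $i \strongeq j$ and congruence) gives $\select{a}{i} \strongeq \select{b}{j}$.

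The main obstacle is a bookkeeping subtlety rather than a deep one: we must be careful that the intermediate term $\select{c}{i}$ is actually present as a shared variable so that reasoning about it is legitimate. This is where the hypothesis that \ti is stably infinite and that index terms in stores/selects are shared matters — but strictly, $\select{c}{i}$ for an arbitrary node $c$ on the path need not be a term of the input. The clean fix is to avoid naming $\select{c}{i}$ explicitly and instead phrase the invariant purely model-theoretically: in any model of the arrangement extending it to the array axioms, the values of $a$ and $c$ at index $i$ coincide; this is justified for each edge exactly as above and composes along the path. Then the final step observes that $\select{a}{i}$ and $\select{b}{j}$, being shared variables forced to equal the value at index $i$ in every such model, must be strongly equivalent, as otherwise the arrangement together with the array axioms would be unsatisfiable, contradicting that $\strongeq$ satisfies the axioms.
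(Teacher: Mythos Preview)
Your proof is correct and follows essentially the same approach as the paper: induction on the length of the witnessing path, with a case split on whether the last edge comes from a strong equality or from a store, invoking congruence in the first case and axiom~(\ref{ax:read-over-write}) in the second. The only cosmetic difference is that the paper carries the index $j$ through the induction (writing $\select{c}{j}$ for the intermediate node) while you carry $i$; since $i\strongeq j$ this is immaterial.

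Your final paragraph about the intermediate term $\select{c}{i}$ not being an input term is a fair observation, but note that the paper's proof has exactly the same feature (it uses $\select{c}{j}$ without assuming it occurs in the input) and simply does not flag it. The reason this is harmless is already in the lemma's hypothesis: $\strongeq$ is assumed to be an arrangement \emph{satisfying the array axioms}, so the reasoning is semantic and one may freely speak of the value of any array at any index. Your model-theoretic reformulation is therefore sound but not strictly needed; the paper just proceeds as if the intermediate selects are available, which is justified under that reading.
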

\begin{proof}
  We induct over the length of the path $P$ witnessing $a\weakeqi b$.

  \paragraph{Base case.}
  In this case, $a$ and $b$ are the same term. Hence
  $\select{a}{i}\strongeq\select{b}{j}$ holds by congruence.

  \paragraph{Step case.}
  Let the step from $c$ to $b$ be the last step of path $P$.
  By induction hypothesis we know that 
  $\select{a}{i}\strongeq\select{c}{j}$ holds.

  If the edge between $c$ and $b$ is due to a strong equivalence (i.~e.,
  $c\strongeq b$), then $\select{c}{j}\strongeq\select{b}{j}$ follows from
  congruence.

  If the edge between $c$ and $b$ is of the form
  $c\stackrel{k}{\leftrightarrow} b$, then either $c$ is $\store{b}{k}{\cdot}$
  or $b$ is $\store{c}{k}{\cdot}$.  In both cases, we get the lemma
  $j=k\lor\select{c}{j}=\select{b}{j}$ from axiom (\ref{ax:read-over-write}).
  Since $j\strongeq i$ and $i\not\strongeq k$, we get $j\not\strongeq k$ and
  thus $\select{c}{j}\strongeq\select{b}{j}$.  We conclude
  $\select{a}{i}\strongeq\select{b}{j}$ by transitivity.
\end{proof}

This lemma allows us to propagate equalities between shared selects.  Note
that it depends upon disequalities between index terms needed to ensure
$a\weakeqi b$.

If two arrays are weak equivalent modulo~$i$ they store the same value at the
index $i$.  The reverse is not necessarily true.  Therefore, we define a
weaker relation weak congruence modulo~$i$. 
\begin{definition}[weak congruence modulo~$i$]\label{def:weakcongi}
  Arrays $a$ and $b$ are \emph{weak congruent modulo~$i$} if and only if they
  are guaranteed to store the same value at index $i$.  We denote weak
  congruence modulo~$i$ by $\weakcongi$ and define $a\weakcongi b := a\weakeqi
  b\lor\exists a'\,b'\,j\,k.\ a\weakeqi
  a'\land i\strongeq j\land
  \select{a'}{j}\strongeq\select{b'}{k}\land k \strongeq i
  \land b'\weakeqi b$.
\end{definition}

We use weak congruences to decide extensionality.  Intuitively, if for all
indices $i$ the weak congruence modulo~$i$ $a\weakcongi b$ holds, then $a=b$
should be propagated.  But this na\"{\i}ve approach requires checking every 
index occurring in the formula.  To minimise the number of indices we need to
consider, we exploit the weak equivalence graph.
\begin{lemma}[weakeq-ext]\label{lem:weakeq-ext}
  Let $\strongeq$ be an arrangement satisfying the array axioms.
  Let $a$ and $b$ be two arrays such that $a\weakpath{P}b$ holds.  If for all
  indices $i\in\Stores{P}$ we have $a\weakcongi b$, then $a\strongeq b$ holds.
\end{lemma}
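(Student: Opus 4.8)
The plan is to use the extensionality axiom (\ref{ax:ext}): to conclude $a \strongeq b$ it suffices to show that for \emph{every} index $i$, $\select{a}{i} \strongeq \select{b}{i}$ holds (after Nelson--Oppen purification, the relevant selects are shared terms, so this is enough to force the arrangement to put $a$ and $b$ in the same class — more precisely, one instantiates the clause form of (\ref{ax:ext}) and derives a contradiction from assuming $a \not\strongeq b$). So the real work is: given an arbitrary index $i$, prove $\select{a}{i} \strongeq \select{b}{i}$ from the hypotheses. I would split on whether $i$ is strongly equivalent to some index in $\Stores{P}$ or not.

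First, the easy case: suppose $i \not\strongeq j$ for all $j \in \Stores{P}$. Then the very path $P$ witnesses $a \weakeqi b$ by Definition~\ref{def:weakeqi}, and Lemma~\ref{lem:read-over-weakeq} (read-over-weakeq) immediately gives $\select{a}{i} \strongeq \select{b}{i}$. Second, the case that uses the hypothesis: suppose $i \strongeq j$ for some $j \in \Stores{P}$. By hypothesis $a \weakcongi b$, so by Definition~\ref{def:weakcongi} either $a \weakeqi b$ — and again Lemma~\ref{lem:read-over-weakeq} finishes — or there are $a', b', j', k'$ with $a \weakeqi a'$, $i \strongeq j'$, $\select{a'}{j'} \strongeq \select{b'}{k'}$, $k' \strongeq i$, and $b' \weakeqi b$. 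In that subcase, apply Lemma~\ref{lem:read-over-weakeq} to $a \weakeqi a'$ with the index pair $(i, j')$ to get $\select{a}{i} \strongeq \select{a'}{j'}$, then chain through $\select{a'}{j'} \strongeq \select{b'}{k'}$, then apply Lemma~\ref{lem:read-over-weakeq} to $b' \weakeqi b$ with index pair $(k', \cdot)$ (using $k' \strongeq i$) to get $\select{b'}{k'} \strongeq \select{b}{i}$; transitivity of $\strongeq$ closes the chain.

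Since $i$ was arbitrary, $\select{a}{i} \strongeq \select{b}{i}$ for all $i$, and extensionality yields $a \strongeq b$. I would also remark that the index $i$ in the second case need not literally occur in the formula as a stand-alone term; what matters is that it ranges over the (finite, up to $\strongeq$) set $\Stores{P}$ together with "everything else," and the "everything else" is uniformly handled by the first case, so no genuinely new index term is required — this is the point the surrounding text is making about minimising instantiations.

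The main obstacle I expect is the bookkeeping in invoking Lemma~\ref{lem:read-over-weakeq} with the right index arguments: that lemma is stated for selects $\select{a}{i}$ and $\select{b}{j}$ with $i \strongeq j$ and $a \weakeqi b$, so in the weak-congruence subcase I must make sure each of the three segments ($a$ to $a'$, the propagated select equality, $b'$ to $b$) is presented with a matching pair of strongly-equivalent indices; the definition of $\weakcongi$ was evidently set up precisely to make these hinge indices ($j' \strongeq i$ and $k' \strongeq i$) available, so this is routine once the cases are laid out. A secondary subtlety is justifying the "it suffices to check indices in $\Stores{P}$" reduction rigorously in the Nelson--Oppen setting — i.e. that the arrangement closed under the array axioms plus these finitely many propagations actually forces $a \strongeq b$ — but this is exactly the content of combining the two cases above with axiom (\ref{ax:ext}), so no extra machinery is needed.
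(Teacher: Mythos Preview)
Your proposal is correct and is precisely the natural unfolding of the paper's one-line proof, which simply says the lemma ``follows from Lemma~\ref{lem:read-over-weakeq}, Definition~\ref{def:weakcongi} and (\ref{ax:ext}).'' Your case split on whether the arbitrary index is $\strongeq$-related to some element of $\Stores{P}$, together with the two applications of Lemma~\ref{lem:read-over-weakeq} in the weak-congruence subcase, is exactly what that sentence is gesturing at; the only minor point you leave implicit is that $i\strongeq j$ implies $\weakcongi[j]$ and $\weakcongi$ coincide, which is immediate from the definitions.
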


\begin{proof}
  Follows from Lemma~\ref{lem:read-over-weakeq},
  Definition~\ref{def:weakcongi} and (\ref{ax:ext}).
\end{proof}


\section{A Decision Procedure Based on Weak Equivalences}
Our decision procedure is based on weak equivalences and the
Nelson--Oppen combination scheme.  It propagates equalities between
terms shared by multiple theories.  We limit the propagation to shared array
terms and array select terms.

The \ta-formulae are preprocessed as follows.  For every
\store{a}{i}{v} contained in the input, we (1) instantiate
the axiom (\ref{ax:idx}) and (2) add \select{a}{i} to the set of terms
contained in the input\footnote{This can be achieved by adding the
equality $\select{a}{i} = \select{a}{i}$.}.  Thus, the preprocessing
step adds at most two select operations for every store.

We propagate new equalities from weak equivalence relations and weak
congruence relations based on lemmas
\ref{lem:read-over-weakeq}~and~\ref{lem:weakeq-ext}.  These relations
depend on the arrangement $\strongeq$, which represents logical
equality ($=$).  We now define a function $\Cond\cdot$ that computes a
condition (a conjunction of equalities and inequalities) under which
a weak equivalence or weak congruence holds. To denote
the condition for a path that does not contain an edge labelled with
index $i$ we use $\Condi{\cdot}$.  For an edge in the weak equivalence
graph that represents an equality, the condition reflects this equality.
For an edge that comes from a $\store{\cdot}{j}{\cdot}$, no condition
is needed.  However, $\Condi{\cdot}$ should ensure that $i$ does not
occur on the path, so $i \neq j$ needs to hold.
\begin{align*}
  \Cond{a\leftrightarrow b} &:= a=b &
  \Condi{a\leftrightarrow b} &:= a=b\\
  \Cond{a\stackrel{j}{\leftrightarrow} b} &:= \mathrm{true} &
  \Condi{a\stackrel{j}{\leftrightarrow} b} &:= i\neq j
\end{align*}  
We can extend these definitions to paths by conjoining the conditions for all
edges on that path.  Then, we can compute $\Cond{a\weakeqi b}$ using the path
that witnesses $a\weakeqi b$.
\[
\Cond{a\weakeqi b} := \Condi{P}\text{ where }a\weakpath{P}b\land\forall
j\in\Stores{P}.\ i\not\strongeq j
\]
Finally, to define $\Cond{a\weakcongi b}$, we use the definition of
$\weakcongi$.
\[
\Cond{a\weakcongi b} :=\begin{cases}
\Cond{a\weakeqi b}&\text{if }a\weakeqi b\\
\begin{array}{l}
\Cond{a\weakeqi a'}\land i = j \land\select{a'}{j}=\select{b'}{k}\\
\quad{}\land k = i\land\Cond{b'\weakeqi b}
\end{array}
&\text{if }
\begin{array}{l}
a\weakeqi a'\land i\strongeq j \land \select{a'}{j}\strongeq\select{b'}{k}\\
\quad{}\land k \strongeq i \land b'\weakeqi b
\end{array}
\end{cases}
\]

\begin{example}\label{ex:condi}
  Consider again the formula $a=\store{b}{j}{v}\land b=\store{c}{i}{w}\land
  d=e\land \select{c}{i}=w$ from Example~\ref{ex:weakeq} whose weak
  equivalence graph is shown in Figure~\ref{fig:weakeq}.  Assume $i\not\strongeq j$.
  Then we have $a\weakeqi\store{c}{i}{w}$ since no edge contains a label that
  is equivalent to $i$.  We get $\Cond{a\weakeqi\store{c}{i}{w}}\equiv
  a=\store{b}{j}{v}\land i\neq j\land b=\store{c}{i}{w}$.

  From Axiom~(\ref{ax:idx}) we get $\select{\store{c}{i}{w}}{i}=w$.  With
  $\select{c}{i}=w$ we conclude $a\weakcongi c$ since
  $a\weakeqi\store{c}{i}{w}$ and $\select{\store{c}{i}{w}}{i} =
  \select{c}{i}$.  We have $\Cond{a\weakcongi c} \equiv
  \Cond{a\weakeqi\store{c}{i}{w}}\land\select{\store{c}{i}{w}}{i} =
  \select{c}{i}$.
\end{example}

To decide the theory of arrays we define two rules to generate
instances of array lemmas.  We present the rules as inference rules.
The rule is applicable if the current arrangement $\strongeq$ on
the shared variables $V$
satisfies the conditions above the line.  The rule then generates a
new (valid) lemma that can propagate an equality under the current
arrangement.

The first rule is based on Lemma~\ref{lem:read-over-weakeq}.  Two select
terms are equivalent if the indices of the selects are congruent and the
arrays are weakly equivalent modulo that index.  We only create this lemma
if the select terms existed in the formula.  Note that we create for
select terms in the formula a shared variable with the same name in $V$.  
\[
\inferrule*{a\weakeqi b\\i\strongeq j\\\select{a}{i},\select{b}{j}\in V}
           {i\neq j\lor\lnot\Cond{a\weakeqi b}\lor
             \select{a}{i}=\select{b}{j}}
           \tag{read-over-weakeq}\label{rule:read-over-weakeq}
\]

The next rule is based on Lemma~\ref{lem:weakeq-ext} and used to
propagate an equality between two extensionally equal array terms.  Two
arrays $a$ and $b$ have to be equal if there is a path $P$ such that
$a\weakpath{P}b$ and for all $i\in\Stores{P}$, $a\weakcongi b$ holds.
\[
\inferrule*{a\weakpath{P}b\\ \forall i\in\Stores{P}.\ a\weakcongi b\\a,b\in V}
           {\lnot\Cond{P}\lor\bigvee_{i\in\Stores{P}}\lnot\Cond{a\weakcongi b}
             \lor a=b}
           \tag{weakeq-ext}\label{rule:weakeq-ext}
\]

The resulting decision procedure is sound and complete for the existential
theory of arrays assuming sound and complete decision procedures for the index
and element theories.

\begin{lemma}[soundness]\label{lem:soundness}
  Rules (\ref{rule:read-over-weakeq}) and (\ref{rule:weakeq-ext}) are sound.
\end{lemma}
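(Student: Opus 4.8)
The plan is to show that each rule, viewed as a propositional clause over the atoms occurring in it, is a logical consequence of the array axioms together with the definitions of $\Cond\cdot$. Since a clause $C_1\lor\dots\lor C_n$ is valid iff every assignment falsifying $C_1,\dots,C_{n-1}$ satisfies $C_n$, I would assume the negation of all the disjuncts except the final equality and derive that equality. Concretely, for (\ref{rule:read-over-weakeq}) I would assume $i=j$ and $\Cond{a\weakeqi b}$ and show $\select{a}{i}=\select{b}{j}$; for (\ref{rule:weakeq-ext}) I would assume $\Cond{P}$ and $\Cond{a\weakcongi b}$ for every $i\in\Stores{P}$ and show $a=b$.

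For rule (\ref{rule:read-over-weakeq}): the key observation is that $\Cond{a\weakeqi b}=\Condi{P}$ for the witnessing path $P$, and $\Condi{P}$ is a conjunction containing $c=d$ for every strong-equivalence edge $c\leftrightarrow d$ on $P$ and $i\neq k$ for every store edge $c\stackrel{k}{\leftrightarrow}d$ on $P$. Under the assumption $i=j$, satisfying $\Condi{P}$ therefore gives exactly the hypotheses of Lemma~\ref{lem:read-over-weakeq} instantiated along this particular path: all the needed strong equalities hold and, for each store index $k$ on the path, $j=i\neq k$ so $j\not\strongeq k$. Hence Lemma~\ref{lem:read-over-weakeq} (or more precisely its inductive argument specialised to $P$) yields $\select{a}{i}=\select{b}{j}$. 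It is important to note that the lemma's statement is about the arrangement $\strongeq$, so strictly I should re-run the induction using the literals asserted by $\Cond{\cdot}$ in place of $\strongeq$; this is routine since the induction in the proof of Lemma~\ref{lem:read-over-weakeq} only ever uses equalities/disequalities that $\Cond{\cdot}$ records, plus instances of (\ref{ax:idx}) and (\ref{ax:read-over-write}).

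For rule (\ref{rule:weakeq-ext}): assume $\Cond{P}$ and, for each $i\in\Stores{P}$, $\Cond{a\weakcongi b}$. From $\Cond{a\weakcongi b}$ and the definition of $\Cond{\cdot}$ on weak congruences, unfolding the two cases gives in either case that $a$ and $b$ store the same value at index $i$: in the first case by the previous rule's argument applied to the modulo-$i$ path, in the second case by combining two modulo-$i$ read-over-weakeq derivations with the asserted select equality $\select{a'}{j}=\select{b'}{k}$ and the index equalities $i=j$, $k=i$, using (\ref{ax:idx}) where one of the $a',b'$ is a store at that index. Thus $\select{a}{i}=\select{b}{i}$ holds for every $i\in\Stores{P}$. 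Since $P$ witnesses $a\weakpath{P}b$, arrays $a$ and $b$ can differ only at indices in $\Stores{P}$ (as stored values at any other index are forced equal along $P$ by read-over-write), so for every index $i$ we have $\select{a}{i}=\select{b}{i}$, and (\ref{ax:ext}) yields $a=b$. This is exactly the content of Lemma~\ref{lem:weakeq-ext}, so the rule simply internalises that lemma plus the fact that $\Cond{\cdot}$ soundly over-approximates the side conditions.

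The main obstacle I anticipate is bookkeeping rather than mathematical depth: making precise that $\Cond{\cdot}$ really captures \emph{sufficient} conditions, i.e., that every equality/disequality used in the proofs of Lemmas~\ref{lem:read-over-weakeq} and~\ref{lem:weakeq-ext} along the chosen witnessing path is literally a conjunct of the relevant $\Cond{\cdot}$ term, so that replacing reasoning ``modulo the arrangement $\strongeq$'' by reasoning ``modulo the asserted literals $\Cond{\cdot}$'' goes through verbatim. A secondary subtlety is the ``differ only at $\Stores{P}$'' step in the extensionality case: one must argue, by an induction along $P$ mirroring Lemma~\ref{lem:read-over-weakeq} but for an index $i\notin\Stores{P}$, that the select values agree -- but here $i$ is not one of the recorded store indices so no disequality literal is needed, and $\Cond{P}$ already supplies all the strong equalities. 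Given Lemmas~\ref{lem:read-over-weakeq} and~\ref{lem:weakeq-ext}, the remaining work is just this translation, so the proof is short.
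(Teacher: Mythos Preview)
Your proposal is correct and follows essentially the same approach as the paper: both derive soundness of the two rules directly from Lemmas~\ref{lem:read-over-weakeq} and~\ref{lem:weakeq-ext}, with your write-up merely unpacking why the syntactic conditions $\Cond{\cdot}$ supply exactly the hypotheses those lemmas need. The paper's own proof is a single sentence (``Soundness of the rules follows directly from the lemma with the corresponding name''), so your version is simply a more detailed rendering of the same argument.
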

\begin{proof}
  Soundness of the rules follows directly from the lemma with the
  corresponding name.
\end{proof}

\begin{restatable}[completeness]{lemma}{completeness}\label{lem:completeness}
  The rules (\ref{rule:read-over-weakeq}) and (\ref{rule:weakeq-ext}) are
  complete.
\end{restatable}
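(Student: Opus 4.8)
The plan is to show that whenever the decision procedure saturates — i.e., no instance of either rule can produce a new equality under the current arrangement $\strongeq$ — the arrangement together with the partial models from the index and element theories can be extended to a full model of \ta. In other words, completeness means: if neither rule applies, the current arrangement is consistent with the array axioms, so the only way a conflict arises is through a propagation the rules could have made. Concretely I would prove the contrapositive: assume $\strongeq$ is a consistent arrangement for the index and element theories on the shared variables $V$ that satisfies both \eqref{ax:idx} (which preprocessing inserts) and \eqref{ax:read-over-write} and \eqref{ax:ext} as far as they are forced, and is closed under the two rules; then construct a model.

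The key steps, in order. First, set up the model: take the index domain and element domain from the given partial models (using stable infiniteness of \ti and the two-element requirement on \te to get enough room), and interpret each array-valued term in $V$ as a function from the index domain to the element domain. The natural definition is: for array $a$ and index value interpreting $i$, set the value to $\select{a}{i}$'s interpretation if $\select{a}{i}\in V$; otherwise pick a default. Second, show this is \emph{well defined up to $\strongeq$}: if $a\strongeq b$ then their interpretations must coincide — here Lemma~\ref{lem:read-over-weakeq} via rule \eqref{rule:read-over-weakeq} guarantees that whenever $a\weakeqi b$ and both relevant selects are in $V$, they are already equated, and for indices where no select term exists the default value is shared, so weakly-equivalent-mod-$i$ arrays agree at $i$. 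Third, verify \eqref{ax:idx} and \eqref{ax:read-over-write}: for a store edge $a\stackrel{k}{\leftrightarrow}b$, at any index $i\not\strongeq k$ the two arrays are weakly equivalent modulo $i$ along the single edge, so by the previous step their interpretations agree at $i$; at $i\strongeq k$, the inserted instance of \eqref{ax:idx} fixes the value to $v$. Fourth, verify \eqref{ax:ext}: if $a\not\strongeq b$ then rule \eqref{rule:weakeq-ext} has \emph{not} fired, so either $a$ and $b$ are not weakly equal at all — in which case one can diagonalize at a fresh index using stable infiniteness / the extra element value to separate them — or there is some $i\in\Stores{P}$ with $a\not\weakcongi b$, and one shows the constructed interpretations genuinely differ at that index.

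The main obstacle will be the fourth step, the extensionality case, and in particular making the model assignment at indices that are \emph{not} among the shared index terms globally coherent. Two distinct array terms that are weakly equal modulo every store index on some path but are not equated must be separated somewhere, and that witness index is by hypothesis not in $V$; one must assign fresh index points consistently across all such pairs simultaneously without accidentally violating \eqref{ax:read-over-write} or re-merging arrays the arrangement keeps apart. This is exactly where the restrictions on the index theory (stable infiniteness, enough fresh values) and on \te (at least two values) are used, and where a careful ordering of the array equivalence classes — processing weak equivalence classes one at a time and within each class fixing values class-index by class-index — is needed to avoid circular dependencies. The remaining steps (well-definedness and the read-over-write check) are essentially direct consequences of Lemma~\ref{lem:read-over-weakeq} and the saturation of rule \eqref{rule:read-over-weakeq}, so I expect them to be routine once the model skeleton is in place.
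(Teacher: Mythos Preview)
Your plan matches the paper's proof: assume saturation, extend the non-array model $\M$ to an array model $\M_A$ by interpreting each array constant as a function, then check selects, stores, and extensionality. Two points need sharpening, and one of them is a real gap in the step you yourself flag as the obstacle.

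The model assignment must be phrased over the whole weak-equivalence-modulo-$i$ class: the paper sets $\M_A(a)(\mbox{\j}) := \M(\select{b}{i})$ whenever \emph{some} $\select{b}{i}$ with $\M(i)=\mbox{\j}$ and $a\weakeqi b$ occurs; your ``use $\select{a}{i}$ if $\select{a}{i}\in V$, else default'' is too narrow and would already break the store check. More importantly, in the extensionality case (arrays $a\not\strongeq b$ with $a\weakpath{P}b$) you need that the interpretations actually differ at the witnessing $i\in\Stores{P}$ with $a\not\weakcongi b$. This only goes through because of the \emph{second} preprocessing clause, which you do not invoke: for every $\store{c}{i}{v}$ the term $\select{c}{i}$ is added. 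Since $i\in\Stores{P}$, taking $a'$ and $b'$ to be the first and last nodes on $P$ incident to an $i$-edge gives $a\weakeqi a'$, $b'\weakeqi b$, and preprocessing guarantees $\select{a'}{i},\select{b'}{i}\in V$; then $a\not\weakcongi b$ forces $\select{a'}{i}\not\strongeq\select{b'}{i}$, hence $\M_A(a)(\M(i))\neq\M_A(b)(\M(i))$. Without this ingredient your default value could make $a$ and $b$ coincide at $\M(i)$, and the argument stalls. Finally, your anticipated ``careful ordering of the array equivalence classes'' is unnecessary: the paper simply picks one fresh index $\fresh{\Weakeq{a}}$ per weak equivalence class and assigns $\Second_\tau$ there and $\First_\tau$ elsewhere, which separates all cross-class pairs at once; the only ordering used is the type order $\sigma\prec\sigma\Rightarrow\tau$ to handle nested array sorts.
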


\begin{longversion}
\begin{proof}
  Assume all rules are saturated.  Let \M be the model generated by
  the theories different from the array theory.  In this model, arrays
  are considered uninterpreted and only subject to congruence.  Note
  that $v_1\strongeq v_2$ if and only if $\M(v_1)=\M(v_2)$ is
  guaranteed in this model.  We create a new model $\M_A$ that extends
  \M by the interpretation of the array terms in the formula.

  First, for every array type $\sigma\Rightarrow\tau$, we define its domain as
  the set of all functions from $\sigma$ to $\tau$.  The interpretation of
  $\select{\cdot}{\cdot}$ is function application and the interpretation of
  $\store{\cdot}{\cdot}{\cdot}$ is function update.  This definition trivially
  satisfies the array axioms (\ref{ax:idx}) and (\ref{ax:read-over-write}).
  For every other function symbol $f$ and every constant $v$ that is not of array type
  we define $\M_A(f) = \M(f)$ and $\M_A(v) = \M(v)$.
  Next we define the interpretation for all constants of type array in $\M_A$,
  such that $\M_A$ satisfies the input formula.

  Let $\prec$ be a partial order on types such that for every type
  $\sigma\Rightarrow\tau$ we have $\sigma\prec\sigma\Rightarrow\tau$ and
  $\tau\prec\sigma\Rightarrow\tau$.  We define the interpretation of constants
  according to $\prec$.  Thus, when defining the interpretation for a constant
  of type $\sigma\Rightarrow\tau$ we assume all constants of type $\sigma$
  resp.\ $\tau$ are already defined.

  For each sort $\tau$ we assume two different values $\First_\tau$ and
  $\Second_\tau$.  Furthermore we assume every sort $\sigma$ that is used as
  index sort for an array sort contains an infinite supply of fresh domain
  elements denoted by $\fresh{\cdot}$.  Then, for an array constant $a$ of sort
  $\sigma\Rightarrow\tau$, we define
  \[
  \M_A(a)(\mbox{\j}):=\begin{cases}
  \M(\select{b}{i})&\text{if }\select{b}{i}\text{ occurs in input, }\M(i)=\mbox{\j}\text{ and } a\weakeqi b\\
  \Second_\tau&\text{if }\mbox{\j}=\fresh{\Weakeq{a}}\\
  \First_\tau&\text{otherwise}
  \end{cases}
  \]
  The first case is well defined.  Given two
  select terms $\select{b_1}{i_1}$ and $\select{b_2}{i_2}$ with
  $a\weakeqi[i_1] b_1$, $a\weakeqi[i_2] b_2$ and $\M(i_1) = \mbox{\j} = \M(i_2)$ we
  need to show $\M(\select{b_1}{i_1})=\M(\select{b_2}{i_2})$.  
  Since $\M(i_1)= \M(i_2)$, we have $i_1 \sim i_2$ and
  $b_1\weakeqi[i_1] b_2$ (since $\weakeqi[i_1]$ and $\weakeqi[i_2]$ is
  the same relation).  The rule (\ref{rule:read-over-weakeq})
  generated the lemma 
  $i_1\neq i_2\lor\lnot\Cond{b_1\weakeqi[i_1] b_2}\lor
  \select{b_1}{i_1}=\select{b_2}{i_2}$.  
  \M guarantees that 
  $i_1=i_2$ and $\Cond{b_1\weakeqi[i_1] b_2}$ hold.
  Thus, $\M(\select{b_1}{i_1})=\M(\select{b_2}{i_2})$ has to hold in
  order to satisfy this lemma.

  We have to show that the input formula is satisfied by $\M_A$ if it
  is satisfied by $\M$.
  We assume that the array operations in the input formula were
  flattened by introducing fresh variables, i.\,e., that
  $\select\cdot\cdot$ and $\store\cdot\cdot\cdot$ occur only in
  definitions $v=\select{a}{i}$ and $b = \store{a}{i}{v}$.  By
  definition of $\M_A$ it already satisfies the parts of the input
  formulae that do not involve arrays.  It remains to show that
  \begin{enumerate}
  \item for all definition $v=\select ai$ 
    $\M_A(v) = \M_A(\select ai)$ holds.
  \item for all definition $b=\store a i v$ 
    $\M_A(b) = \M_A(\store aiv)$ holds.
  \item $\M_A(a) = \M_A(b)$ if and only if $\M(a) = \M(b)$ for array constants
    $a$ and $b$, and
  \end{enumerate}

  \begin{enumerate}
  \item For a definition $v=\select{a}{i}$ the model $\M$ already guarantees
  $\M(v) = \M(\select{a}{i})$.  The definition of $\M_A$ gives us
  $\M_A(v) = \M(v) = \M(\select ai) = \M_A(a)(\M(i)) = \M_A(\select ai)$ 
  as required.
  
  \item For $b=\store{a}{i}{v}$ we need to show
  $\M_A(b)(\mbox{\j}) = \M_A(a)(\mbox{\j})$ for $\mbox{\j}\neq \M(i)$ and $\M_A(b)(\M(i)) =
  \M(v)$.
  Our preprocessing step adds the equality $\select{\store{a}{i}{v}}{i}=v$
  to the input.  Thus, $\M_A(b)(\M(i)) = \M(\select{\store{a}{i}{v}}{i})
  = \M(v)$. For $\mbox{\j}\neq \M(i)$, we can derive
  $\M_A(b)(\mbox{\j}) = \M_A(a)(\mbox{\j})$ from $\Weakeq{b} = \Weakeq{a}$ and the fact
  that $a\weakeqi[j] b$ for every $j\not\strongeq i$.

  \item We show that our extended model satisfies equalities on arrays.  If
  $a\strongeq b$ holds for two arrays $a$ and $b$, then by construction
  $\M_A(a)=\M_A(b)$.  If $a\not\strongeq b$ holds for two arrays $a$ and $b$,
  then we distinguish two cases.  If $a$ and $b$ are not connected in the weak
  equivalence graph, they differ at the indices $\fresh{\Weakeq{a}}$ and
  $\fresh{\Weakeq{b}}$.  Hence, $\M_A(a)\neq\M_A(b)$ holds.  Otherwise, since
  rule (\ref{rule:weakeq-ext}) is saturated and $a\weakpath{P} b$ for some 
  path $P$ but $a\neq b$, there exists an $i\in\Stores{P}$ such that
  $a\not\weakcongi b$ holds.
  Let $a'$ be the first array on the path $P$ that involves an edge
  $\stackrel{i}\leftrightarrow$ and $b'$ the last such array.
  In the preprocessing step we added $\select {a'}{i}$ and $\select {b'}{i}$.
  By the choice of $a'$ and $b'$ we have $a\weakeqi a'$ and $b' \weakeqi b$.
  Since $a\not\weakcongi b$ holds, we have $\select{a'}{i} \not
  \strongeq\select{b'}{i}$. By definition $\M_A(a)(\M(i)) = \M(\select{a'}{i})$
  and $\M_A(b)(\M(i)) = \M(\select{b'}{i})$, so $\M_A(a)$ and $\M_A(b)$ are
  different arrays as desired.
  \end{enumerate}
\end{proof}
\end{longversion}
\begin{shortversion}
\noindent The proof of this lemma can be found in the extended version of this
paper~\cite{longversion}.
\end{shortversion}

\section{Restricting Instantiations}
The preprocessor is the only component of our decision procedure that produces
new select terms and thus might trigger new lemmas.  These lemmas only
generate new (dis-)equality literals between existing terms.  Thus, reducing
the number of select terms might reduce the number of lemmas generated by our
decision procedure and speed up the procedure.

If the element theory is stably infinite we can omit the preprocessor step
that adds for every $\store{a}{i}{v}$ the select $\select{a}{i}$.  Instead, we
simply assume $\select{a}{i}$ to be different than any other $\select{b}{i}$
unless $a\strongeq b$.  This method preserves soundness and
completeness.

\begin{lemma}(soundness of modified procedure)
  The modified procedure is sound.
\end{lemma}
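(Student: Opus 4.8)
The plan is to reduce soundness of the modified procedure to Lemma~\ref{lem:soundness}, by checking that the modification only tightens the premises under which a lemma is emitted and never alters its conclusion. As for Lemma~\ref{lem:soundness}, ``sound'' here means that every emitted lemma is \ta-valid, so that any conflict derived from such lemmas — and any equality propagated through them — is genuine. The two inference rules~(\ref{rule:read-over-weakeq}) and~(\ref{rule:weakeq-ext}) are left syntactically unchanged; the only consequences of dropping the preprocessor step are that (i) fewer select terms lie in the shared set $V$, and (ii) when the procedure decides whether $a\weakcongi b$ holds it treats each ``select'' $\select{a}{i}$ that would have been introduced for a store $\store{a}{i}{v}$ as distinct from every other such select unless the underlying arrays are strongly equivalent.

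Rule~(\ref{rule:read-over-weakeq}) is the easy case. Its premises involve only the relation $\weakeqi$ and membership of the two selects in $V$, and both $\weakeqi$ and the condition $\Cond{a\weakeqi b}$ depend solely on the weak equivalence graph — store edges and strong-equality edges between genuine array terms — which the modification leaves untouched. Hence every instance produced by~(\ref{rule:read-over-weakeq}) is already covered by Lemma~\ref{lem:read-over-weakeq} and is \ta-valid.

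For rule~(\ref{rule:weakeq-ext}) the main point to establish is that the weak congruence relation computed by the modified procedure is contained in the semantic one: whenever the procedure asserts $a\weakcongi b$, the arrays $a$ and $b$ really store the same value at index~$i$ in every model consistent with the current arrangement that satisfies the array axioms. By Definition~\ref{def:weakcongi} the first disjunct $a\weakeqi b$ forces agreement at~$i$ via Lemma~\ref{lem:read-over-weakeq}, while the second disjunct can now only be witnessed through select terms that genuinely occur in the formula, so the chain of equalities it asserts is semantically justified. Crucially, the ``assume distinct'' convention only ever \emph{removes} witnesses for the second disjunct; it never contributes a conjunct to $\Cond{a\weakcongi b}$, so this condition stays a correct under-approximation, i.e.\ any model satisfying it makes $a$ and $b$ agree at~$i$. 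Therefore, whenever~(\ref{rule:weakeq-ext}) fires on a path $P$ with $a\weakcongi b$ for all $i\in\Stores{P}$, the hypotheses of Lemma~\ref{lem:weakeq-ext} hold in every model satisfying $\Cond{P}$ and all the $\Cond{a\weakcongi b}$, yielding $a=b$; the emitted lemma $\lnot\Cond{P}\lor\bigvee_{i\in\Stores{P}}\lnot\Cond{a\weakcongi b}\lor a=b$ is thus \ta-valid.

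Combining both cases, every lemma the modified procedure produces is \ta-valid, so any refutation it assembles is genuine and any equality it propagates is entailed; hence the procedure is sound. I expect the only delicate step to be making precise the claim in the previous paragraph — that the modified handling of the un-instantiated selects is \emph{conservative}, restricting when $\weakcongi$ (and hence~(\ref{rule:weakeq-ext})) applies without ever licensing an unjustified propagation and, in particular, without ever injecting a disequality $\select{a}{i}\neq\select{b}{i}$ into any $\Cond{\cdot}$. This is a monotonicity/under-approximation argument, not a new semantic ingredient. Note that stable infiniteness of the element theory plays no role here; it is needed only for completeness of the modified procedure, to realise the assumed-distinct values in an actual model.
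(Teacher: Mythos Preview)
Your proposal is correct and follows essentially the same approach as the paper: both reduce soundness of the modified procedure to Lemma~\ref{lem:soundness} by observing that the omitted preprocessing step plays no role in the validity of the generated lemmas. The paper's proof is a single sentence to this effect, whereas you additionally unpack why the ``assume distinct'' convention is conservative---a correct but more detailed justification than the paper provides.
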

\begin{proof}
  Follows directly from Lemma~\ref{lem:soundness} since it does not rely on
  the addition of $\select{a}{i}$ for every $\store{a}{i}{\cdot}$.
\end{proof}

For the completeness lemma we take into account the fact that the element
theory is stably infinite.  Thus, if $\select{a}{i}$ is not present we use a
fresh element in the value domain.

\begin{restatable}[completeness of modified
    procedure]{lemma}{modifiedcompleteness}\label{lem:modifiedcompleteness}
  The modified procedure is complete.
\end{restatable}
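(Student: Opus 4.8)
The plan is to follow the proof of Lemma~\ref{lem:completeness} and patch the single place that used the now-omitted preprocessing step, namely the construction of the extended model $\M_A$ and case~(3) of its verification. The modified procedure no longer attaches a select $\select{a}{i}$ to each $\store{a}{i}{v}$, but it keeps the instantiation of axiom~(\ref{ax:idx}), so $\select{\store{a}{i}{v}}{i}=v$ is still available. Since the element theory~\te is stably infinite, we may choose the model~\M produced by the non-array theories so that every element sort $\tau$ has an infinite domain; besides $\First_\tau$, $\Second_\tau$ and the finitely many values $\M(v)$ for shared element variables $v$, the sort $\tau$ then contains an infinite supply of fresh element values. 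These will serve as the values of the \emph{implicit} selects $\select{a}{i}$, which the modified procedure treats as distinct from any other $\select{b}{i}$ unless $a\strongeq b$.

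For the model, I would reuse the three-case definition of $\M_A(a)(\mbox{\j})$ from the proof of Lemma~\ref{lem:completeness} and insert one new case, just before the $\First_\tau$ fallback: if $\mbox{\j}=\M(i)$ for some index term $i$ that occurs as the label of an edge in the connected component of $a$ and the first case does not apply, set $\M_A(a)(\mbox{\j})$ to a fresh element value that depends only on $\mbox{\j}$ and on the $\weakeqi$-class of $a$. This is well defined: $\M(i)=\M(i')$ forces $i\strongeq i'$, so $\weakeqi$ and $\weakeqi[i']$ are the same relation and the index class and its partition into $\weakeqi$-classes are determined by $\mbox{\j}$ alone; distinct $\weakeqi$-classes receive distinct fresh values, which is exactly where the infiniteness of $\tau$ enters. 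The first case stays well defined by the argument already given, using rule~(\ref{rule:read-over-weakeq}) for two input selects and the fresh-value assignment otherwise; here one uses that $\weakeqi$ is an equivalence relation on arrays for each fixed index class, so ``being forced equal at the index value $\mbox{\j}$'' partitions the genuine and implicit selects coherently.

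With this $\M_A$ I would replay the three verification steps. Steps~(1) and~(2) are unchanged, since step~(2) relies only on the retained instantiation $\select{\store{a}{i}{v}}{i}=v$ together with $\Weakeq{b}=\Weakeq{a}$ and $a\weakeqi[j]b$ for every $j\not\strongeq i$. For step~(3), when $a\not\strongeq b$ are connected by a path~$P$, saturation of rule~(\ref{rule:weakeq-ext}) still produces some $i\in\Stores{P}$ with $a\not\weakcongi b$; letting $a'$ and $b'$ be the first and last arrays on~$P$ that are incident to an edge labelled $i$, we have $a\weakeqi a'$ and $b'\weakeqi b$, and from $a\not\weakcongi b$ we obtain $\select{a'}{i}\not\strongeq\select{b'}{i}$ (hence $a'\not\strongeq b'$) and, by transitivity of $\weakeqi$, also $a'\not\weakeqi b'$. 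A short case analysis on whether $\select{a'}{i}$ and $\select{b'}{i}$ are genuine input selects or implicit ones then shows $\M_A(a)(\M(i))\neq\M_A(b)(\M(i))$: when both are genuine, this is forced by $a\not\weakcongi b$ exactly as in the original proof; when one or both are implicit, $a'\not\weakeqi b'$ (together with the fact that a fresh value never coincides with the value of a genuine input select) gives distinct values. Hence $\M_A(a)\neq\M_A(b)$ as required. The step I expect to be the main obstacle is making the fresh-value assignment for the implicit selects precise and genuinely coherent --- checking that ``forced equal at a common index value'' is an equivalence relation compatible with the equalities rule~(\ref{rule:read-over-weakeq}) imposes on the genuine selects, and that stable infiniteness of~\te really supplies enough pairwise distinct fresh values; the remainder is a routine rerun of the proof of Lemma~\ref{lem:completeness}.
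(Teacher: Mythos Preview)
Your proposal is correct and matches the paper's proof essentially line for line: the paper also patches only the model construction, inserting the same extra case---a fresh element value $\fresh{\Weakeqi{a}}$ indexed by the weak-equivalence-modulo-$i$ class of $a$ whenever $\M(i)=\mbox{\j}$, $i\in\Stores{\Weakeq{a}}$, and no genuine select lands in that class---and then redoes only the extensionality check. One small point: in your step~(3) you split on whether the specific terms $\select{a'}{i}$ and $\select{b'}{i}$ (with $a',b'$ taken from the path~$P$) are genuine or implicit, but the correct split is whether \emph{any} input select lies in the $\weakeqi$-class of $a$ (respectively $b$); the paper phrases it this way directly, which avoids the detour through $a'\not\weakeqi b'$ and the worry that $\select{a'}{i}$ might be implicit while some other $\select{c}{j}$ with $c\weakeqi a$, $j\strongeq i$ is genuine.
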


\begin{longversion}
\begin{proof}
  We redefine the generation of the model $\M_A$ in the proof of
  Lemma~\ref{lem:completeness}.  

  Let $\Weakeqi{a} := \{b\ |\ b\weakeqi a\}$
  denote the set of all array terms that are weakly equivalent modulo $i$ to
  $a$.  We construct $\M_A$ in the following way.
  \[
  \M_A(a)(\mbox{\j}):=\begin{cases}
  \M(\select{b}{i})&\text{if } \select{b}{i}\text{ occurs in the input, } \M(i)=\mbox{\j}\text{ and }
  a\weakeqi b\\
  \fresh{\Weakeqi{a}}&\text{if }\M(i)=\mbox{\j}, i\in\Stores{\Weakeq{a}}\text{ and there are
    no $b$ and $j$}\\
  &\quad\text{such that }\select{b}{j}\text{ occurs in the input,
  }\M(j)=\mbox{\j}\text{ and }a\weakeqi[j] b\\
  \Second_\tau&\text{if }\mbox{\j}=\fresh{\Weakeq{a}}\\
  \First_\tau&\text{otherwise}
  \end{cases}
  \]
  Note that we only change the model for array $a$ and index $i$ if
  $\store{a}{i}{\cdot}$ exists in the input, but $\select{a}{i}$ does not.
  Hence, for definition of the form $v=\select{a}{i}$ and $b =
  \store{a}{i}{v}$ the old proof can be reused.  Thus, we only need to
  show extensionality.

  If $\M(a)=\M(b)$ for two array terms $a$ and $b$, then $a\strongeq
  b$.  Thus, for all indices $i$, $\M_A(a)(i)=\M_A(b)(i)$ holds by
  construction.  Otherwise we distinguish two cases.  If $a$ and $b$
  are not connected in the weak equivalence graph, $\M_A(a)$,
  $\M_A(b)$ differ, e.\,g., on index $\fresh{\Weakeq{a}}$.  If the
  arrays are connected ($a\weakpath{P} b$), there is an
  $i\in\Stores{P}$ such that $a \not\weakcongi b$.  If there is no
  $\select{a'}{j}$ with $a\weakeqi a'$ and $j \strongeq i$, then
  $\M_A(a)(\M(i)) = \fresh{\Weakeqi{a}}$ differs from
  $\M_A(b)(\M(i))$.  Similarly for $b$, if there is no $\select{b'}{k}$ with
  $b\weakeqi b'$ and $k\strongeq i$.
  Otherwise $\M_A(a)(\M(i)) = \M(\select{a'}{j})$ and
  $\M_A(b)(\M(i)) = \M(\select{b'}{k})$ and these values differ since 
  $a\not\weakcongi b$.
\end{proof}
\end{longversion}
\begin{shortversion}
\noindent The proof of this lemma can be found in the extended version of this
paper~\cite{longversion}.
\end{shortversion}

This optimisation enables us to limit the number of additional terms in the
input.  Since we only need to generate (\ref{rule:read-over-weakeq}) lemmas if
the select terms in the conclusion are present after preprocessing, this
optimisation also reduces the number of lemmas.  Furthermore, it is widely
applicable. In fact, the non-bitvector logics defined in the
SMTLIB~\cite{BarST-SMT-10} only allow array sorts where the element theory is
stably infinite.  Thus, only the terms corresponding to instantiations of
Axiom~(\ref{ax:idx}) are required.  In an actual implementation even these
terms could be omitted (see~\cite{DBLP:journals/jsat/BrummayerB09}).

\section{Implementation and Evaluation}\label{sec:implementation}
We implemented the decision procedure described in this paper in our SMT
solver SMTInterpol~\cite{DBLP:conf/spin/ChristHN12}.
Besides the aforementioned preprocessing step that applies (\ref{ax:idx}) to
every \store{\cdot}{\cdot}{\cdot} in the input, we also simplify \ta-formulae by
applying (\ref{ax:read-over-write}) if the index of the store and the index of
the select are syntactically equal.  Furthermore, we contract terms of the
form $\store{\store{a}{i}{v_2}}{i}{v_1}$ to $\store{a}{i}{v_1}$.  We only add
\select{a}{i} to the set of terms contained in the formula if we have
\store{a}{i}{v} in the input and the domain of $v$ is finite.

We represent the weak equivalence relation and the weak equivalence
modulo~$i$ relations in a forest structure, similarly to the
representation of equivalence graph in congruence
solvers~\cite{DBLP:conf/rta/NieuwenhuisO05}.  Every node has an
outgoing edge, and these edges build a spanning tree for every
equivalence class.  The edges point from a child node to the parent
node.  The root node of every tree has no outgoing edge and is the
representative of its equivalence class.

We have to distinguish between strong equivalence, weak equivalence,
and weak equivalence modulo~$i$.  The strong equivalence classes are
already handled by the equality solver.  In our implementation of the
array solver we treat them as indivisible and create a single node for
every strong equivalence class.  To represent the weak equivalence
relations the nodes have up to two outgoing edges, a primary $p$ and a
secondary $s$, see Figure~\ref{alg:data-structures}.  The edges come
from a store operation and correspond to the edges
$\stackrel{i}{\leftrightarrow}$ in the weak equivalence graph.  The
index of the primary edge is stored in the $pi$ field.
\begin{figure}[htbp]
\begin{minipage}{.5\textwidth}
\begin{tabbing}
\textbf{struct} $\type{node}$\\
\qquad\=
       $p : \type{node}$\\
\>     $pi : \type{index}$\\
\>     $s : \type{node}$
\end{tabbing}
\begin{codebox}
\Procname{$\proc{get-rep}(n : \type{node})$}
\zi     \If $n.p = \const{nil}$ \Then $n$
\zi     \Else $\proc{get-rep}(n.p)$
        \End
\end{codebox}
\begin{codebox}
\Procname{$\proc{make-rep}(n : \type{node})$}
\zi  \If $n.p \neq \const{nil}$ \Then
\zi     $\proc{make-rep}(n.p)$
\zi	$\left.\begin{array}{@{}l@{}}
        n.p.p  := n \\
        n.p.pi := n.pi\\
        n.p := \const{nil}
        \end{array}\right\}\mbox{\parbox{2.5cm}{\raggedright invert primary edge}}$
\zi	$\proc{make-rep${}_i$}(n)$
     \End
\end{codebox}
\end{minipage}%
\begin{minipage}{.5\textwidth}
\begin{codebox}
\Procname{$\proc{get-rep${}_i$}(n : \type{node}, i : \type{index})$}
\zi     \If $n.p = \const{nil}$ \Then $n$
\zi     \ElseIf $n.pi \neq i$ \Then $\proc{get-rep${}_i$}(n.p, i)$
\zi     \ElseIf $n.s = \const{nil}$ \Then n
\zi     \Else $\proc{get-rep${}_i$}(n.s, i)$
        \End
\end{codebox}
\begin{codebox}
\Procname{$\proc{make-rep${}_i$}(n : \type{node})$}
\zi   \If $n.s \neq \const{nil}$ \Then
\zi     \If $n.s.pi \neq n.pi$ \Then 
\zi        $n.s := n.s.p$\quad\mbox{\parbox{2.2cm}{move towards representative}}
\zi        $\proc{make-rep${}_i$}(n)$
\zi     \Else
\zi        $\proc{make-rep${}_i$}(n.s)$
\zi	$\left.\begin{array}{@{}l@{}}
        n.s.s := n.s\\
        n.s := \const{nil}\\
        \end{array}\right\}\mbox{\parbox{2.5cm}{\raggedright invert secondary edge}}$
        \End
      \End
\end{codebox}
\end{minipage}%
\caption{Data structure and functions to represent weak equivalence
  relations.  A \type{node} structure is created for every strong
  equivalence class on arrays. It contains two outgoing edges $p,s$
  pointing towards the representative of the weak equivalence classes.
  The functions \proc{get-rep} and \proc{get-rep${}_i$} are used to
  find the representative of the weak equivalence (resp. weak
  equivalence modulo~$i$) class.  The functions $\proc{make-rep}$ and
  $\proc{make-rep${}_i$}$ invert the edges to make a node the
  representative of its weak equivalence classes.\label{alg:data-structures}}
\end{figure}
The primary edge points towards the representative of the weak
equivalence class.  Every primary edge $p$ connects the node
representing (the strong equivalence class of) a store $\store ajv$
with the node representing $a$ and the corresponding index in the $pi$
field is $j$.  Note, however, that the direction of the edge can be
arbitrary, as we invert the edges during the execution of the
algorithm.  If the primary edge is missing the node is the
representative of its weak equivalence class and of all its weak
equivalence modulo~$i$ classes.

While the primary edge is enough to represent the weak equivalence
relation we need another edge to represent weak equivalence modulo
$i$.  The representative of weak equivalence modulo~$i$ is
also found by following the primary edges. However, if the store of
the primary edge is on the index $i$, the secondary edge is followed
instead.  If the secondary edge is missing the node is the
representative of its weak equivalence modulo~$i$ class.  

The equivalence classes are represented as follows. Two
arrays $a$ and $b$ are weakly equivalent iff 
$\proc{get-rep}(a) = \proc{get-rep}(b)$ and
$a\weakeqi b$ iff $\proc{get-rep${}_i$}(a, i) =
\proc{get-rep${}_i$}(b,i)$.
\begin{figure}[htbp]
\begin{minipage}{.5\textwidth}
\begin{codebox}
\Procname{$\proc{add-secondary}(S : \type{index set}, a, b : \type{node})$}
\zi     \If $a = b$ \Then
\zi        \Return
        \End
\zi     \If $a.pi \notin S
            \land \proc{get-rep${}_i$}(a, a.pi) \neq b$ \Then
\zi	   $\proc{make-rep${}_i$}(a)$
\zi        $a.s := b$
        \End
\zi     $\proc{add-secondary}(S \cup \{a.pi\}, a.p, b)$
\end{codebox}
\end{minipage}%
\begin{minipage}{.5\textwidth}
\begin{codebox}
\Procname{$\proc{add-store}(a,b : \type{node}, i: \type{index})$}
\zi     \proc{make-rep}(b)
\zi     \If $\proc{get-rep}(a) = b$ \Then
\zi        $\proc{add-secondary}(\{i\}, a, b)$
\zi     \Else
\zi        $b.p := a$
\zi        $b.pi := i$
        \End
\end{codebox}
\end{minipage}
\caption{The algorithm \proc{add-store} adds a new store edge to the
  data structure updating the weak equivalence classes.  In the else
  case a new primary edge is added to merge two disjoint weak
  equivalence classes.  Otherwise, \proc{add-secondary} inserts new
  secondary edges to merge the necessary weak equivalence modulo~$i$
  classes. \label{alg:add-store}}
\end{figure}

The algorithm proceeds by inserting the store edges one by one,
similarly to the algorithm presented
in~\cite{DBLP:conf/rta/NieuwenhuisO05}.  The algorithm that inserts a
store edge is given in Figure~\ref{alg:add-store}.  The algorithm
first inverts the outgoing edges of one node to make it the
representative of its weak equivalence class.  If the other side of
the store edge lies in a different weak equivalence classes, the store
can be inserted as a new primary edge.

If the nodes are already weakly equivalent the
procedure~\proc{add-secondary} is called.  This procedure follows the
path from the other array $a$ to the array $b$ that was made the
representative.  For every node on this path it checks if a secondary
edge needs to be added.  If the primary edge of the node is labelled
with a store on~$i$, the algorithm first checks if the node is weakly
equivalent modulo~$i$ with $b$ due to the new store edge.
This is the case if no store on $i$ occurred on the path so far and the
new store is also on an index different from $i$.  We use the set $S$
to collect these forbidden indices.  Then if $b$ is not already the 
representative of the weak equivalence modulo~$i$ class, the outgoing 
secondary edges are reversed and a new secondary edge is added.

The complexity of the procedure \proc{add-store} is worst case
quadratic in the size of the weak equivalence class.  This stems from
\proc{make-rep${}_i$} being linear in the size and being called a
linear number of times.  The overall complexity is cubic in the number
of stores in the input formula.  The space requirement, however, is
only linear.  In our current implementation in SMTInterpol this
procedure was not a bottleneck so far.  In SMTInterpol we also keep
the stores that created the primary and secondary edge in the
data-structure.  This allows for computing the paths needed for lemma
generation in linear time.

\begin{example}\label{ex:addedge}
  Figure~\ref{fig:addedge} shows an example of the data structure
  where the primary edges are labelled by the index of the
  corresponding store.  This data structure represents only one weak
  equivalence class with the representative node 0.  The resulting
  data structure after adding a store with index $k$ between nodes 0
  and 4 is shown on the right.  Since nodes 0 and 4 were
  already in the same weak equivalence class, secondary edges were
  added.  \pgfdeclarelayer{back} \pgfsetlayers{back,main}
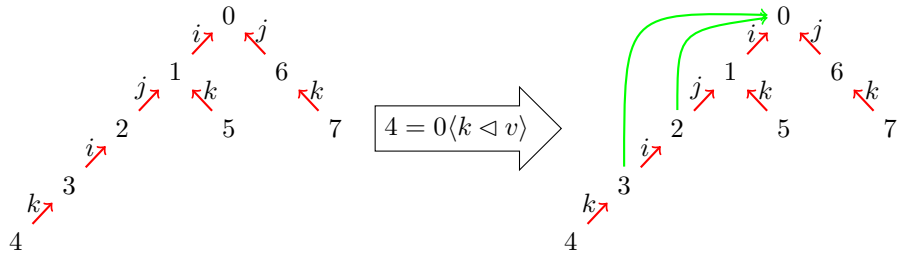
\begin{figure}[htbp]
\hfill
\begin{tikzpicture}[edge from parent/.style={draw,red,<-,thick},level
    distance=0.75cm,sibling distance=1.4cm,index/.style={black,inner sep=0cm}]
  \begin{scope}
  \node[name=n0] {0}
  child[thick] {
    node[name=n1] {1}
    child {
      node[name=n2] {2}
      child {
        node[name=n3] {3}
        child {
          node[name=n4] {4}
          edge from parent node[above left,index] {$k$}
        }
        child[missing] {
          node {}
        }
        edge from parent node[above left,index] {$i$}
      }
      child[missing] {
        node {}
      }
      edge from parent node[above left,index] {$j$}
    }
    child {
      node[name=n5] {5}
      edge from parent node[above right,index] {$k$}
    }
    edge from parent node[above left,index] {$i$}
  }
  child[thick] {
    node[name=n6] {6}
    child[missing] {
      node {}
    }
    child {
      node[name=n7] {7}
      edge from parent node[above right,index] {$k$}
    }
    edge from parent node[above right,index] {$j$}
  };
  \end{scope}
  \node[draw,shape=single arrow] at(3,-1.5) {$4 = \store{0}{k}{v}$};
  \begin{scope}[xshift=7.3cm]
  \node[name=n0] {0}
  child[thick] {
    node[name=n1] {1}
    child {
      node[name=n2] {2}
      child {
        node[name=n3] {3}
        child {
          node[name=n4] {4}
          edge from parent node[above left,index] {$k$}
        }
        child[missing] {
          node {}
        }
        edge from parent node[above left,index] {$i$}
      }
      child[missing] {
        node {}
      }
      edge from parent node[above left,index] {$j$}
    }
    child {
      node[name=n5] {5}
      edge from parent node[above right,index] {$k$}
    }
    edge from parent node[above left,index] {$i$}
  }
  child[thick] {
    node[name=n6] {6}
    child[missing] {
      node {}
    }
    child {
      node[name=n7] {7}
      edge from parent node[above right,index] {$k$}
    }
    edge from parent node[above right,index] {$j$}
  };
  \draw[green,thick,->] (n3) .. controls +(0,2.3) .. (n0);
  \draw[green,thick,->] (n2) .. controls +(0,1.3) .. (n0);
  \end{scope}
\end{tikzpicture}
\hfill\hbox{}
\caption{\label{fig:addedge} Weak equivalence classes represented by a
graph using primary and secondary edges.  The short direct edges are
primary edges, the long bended edges are secondary edges.  Each
primary edge represents a store edge between the connected nodes and is
labelled by the index of the store.  The secondary edges in the right graph
were created by a store edge on index $k$ between node 0 and 4 as described
in Example~\ref{ex:addedge}.
}
\end{figure}

These secondary edges are needed to connect the weak equivalence
modulo $i$ and modulo $j$ classes.  Figure~\ref{fig:mergei} shows how
the first secondary edge connects the two weak equivalence modulo $i$
classes rooted at nodes 0 resp.\ 3.  This is necessary since there is
now a new path using the edge from 4 to 0.  Note that no secondary
edge is added to node~1, since nodes 1, 2, and 5 are still not weakly
equivalent modulo $i$ to the other nodes.
Figure~\ref{fig:mergej} shows the connection between the two weak equivalence
modulo $j$ classes rooted at nodes 0 resp.\ 2.  The weak equivalence modulo
$j$ class rooted at node 6 is not affected by a new edge between nodes 0 and 4
since these nodes are on a different path.
\begin{figure}[htbp]
\subfigure[Merging weak equivalence modulo $i$ classes.\label{fig:mergei}]{
\begin{tikzpicture}[edge from parent/.style={draw,red,<-,thick},level
    distance=0.75cm,sibling distance=1.4cm,index/.style={black,inner sep=0cm}]
  \node[name=n0] {0}
  child[thick] {
    node[name=n1] {1}
    child {
      node[name=n2] {2}
      child {
        node[name=n3] {3}
        child {
          node[name=n4] {4}
          edge from parent node[above left,index] {$k$}
        }
        child[missing] {
          node {}
        }
        edge from parent node[above left,index] {$i$}
      }
      child[missing] {
        node {}
      }
      edge from parent node[above left,index] {$j$}
    }
    child {
      node[name=n5] {5}
      edge from parent node[above right,index] {$k$}
    }
    edge from parent node[above left,index] {$i$}
  }
  child[thick] {
    node[name=n6] {6}
    child[missing] {
      node {}
    }
    child {
      node[name=n7] {7}
      edge from parent node[above right,index] {$k$}
    }
    edge from parent node[above right,index] {$j$}
  };
  \draw[green,thick,->] (n3) .. controls +(0,2.3) .. (n0);
  \draw[black,dashed,thick] (n4) .. controls +(0,3.5) .. node[left]{$k$} (n0);
  \begin{pgfonlayer}{back}
  \draw[rotate=45,fill=white!80!black] (n6) ellipse [x radius=.5cm,y radius=1.7cm];
  \draw[rotate=-45,fill=white!80!black] ($(n3)!.5!(n4)$) ellipse [x radius=.5cm,y radius=1cm];
  \draw[fill=white!90!black] ($(n1)-(0,.6)$) circle [radius=.9cm];
  \end{pgfonlayer}
\end{tikzpicture}
}
\hfill
\subfigure[Merging weak equivalence modulo $j$ classes.\label{fig:mergej}]{
\begin{tikzpicture}[edge from parent/.style={draw,red,<-,thick},level
    distance=0.75cm,sibling distance=1.4cm,index/.style={black,inner sep=0cm}]
  \node[name=n0] {0}
  child[thick] {
    node[name=n1] {1}
    child {
      node[name=n2] {2}
      child {
        node[name=n3] {3}
        child {
          node[name=n4] {4}
          edge from parent node[above left,index] {$k$}
        }
        child[missing] {
          node {}
        }
        edge from parent node[above left,index] {$i$}
      }
      child[missing] {
        node {}
      }
      edge from parent node[above left,index] {$j$}
    }
    child {
      node[name=n5] {5}
      edge from parent node[above right,index] {$k$}
    }
    edge from parent node[above left,index] {$i$}
  }
  child[thick] {
    node[name=n6] {6}
    child[missing] {
      node {}
    }
    child {
      node[name=n7] {7}
      edge from parent node[above right,index] {$k$}
    }
    edge from parent node[above right,index] {$j$}
  };
  \draw[green,thick,->] (n2) .. controls +(0,1.3) .. (n0);
  \draw[black,dashed,thick] (n4) .. controls +(0,3.5) .. node[left]{$k$} (n0);
  \begin{pgfonlayer}{back}
  \draw[rotate=-45,fill=white!80!black] (n3) ellipse [x radius=.5cm,y radius=1.6cm];
  \draw[rotate=45,fill=white!90!black] ($(n6)!.5!(n7)$) ellipse [x radius=.5cm,y radius=.9cm];
  \draw[rounded corners,fill=white!80!black] (n0.north west) -- ($(n1.west)-(.2,0)$) --
  ($(n5.south east)-(0,.2)$) -- (n0.north east) -- cycle;
  \end{pgfonlayer}
\end{tikzpicture}
}
\caption{\label{fig:mergeweakeqi}Secondary edges merge weak equivalence modulo
  $i$ classes.}
\end{figure}
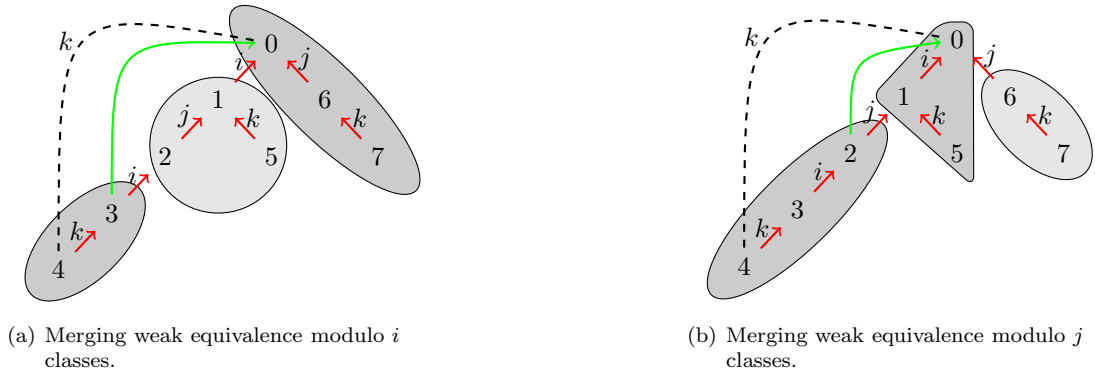
\end{example}

We implemented this decision procedure in our SMT solver
SMTInterpol~\cite{DBLP:conf/spin/ChristHN12} and tested it on the benchmarks
from the QF\_AX and QF\_AUFLIA devisions of the SMTEVAL 2013 benchmarks.  We
solved all benchmarks in 1:32 resp.\ 10:45 minutes without running into a
timeout of 10 minutes.  According to the data from the SMTEVAL, no other
solver was able to solve all benchmarks in these divisions.  We defer an
up-to-date comparison to the SMTCOMP 2014.

\section{Conclusion and Future Work}
We presented a new decision procedure for the extensional theory of arrays.
This procedure exploits weak equalities to limit the number of axiom
instantiations.  The instantiations produced by the decision procedure
presented in this paper can be restricted to terms already present in the
input formula.  Furthermore we discussed an implementation based on a graph
structure similar to congruence closure graphs.  This decision procedure is
implemented in our SMT solver \si~\cite{DBLP:conf/spin/ChristHN12}.  We plan
to implement a variant of the quantifier-free interpolation for
arrays~\cite{DBLP:journals/corr/abs-1204-2386} based on the lemmas generated
by this decision procedure.  Since these lemmas only generate mixed
equalities, proof tree preserving
interpolation~\cite{DBLP:conf/tacas/ChristHN13} can be used.

\bibliographystyle{splncs03}
\bibliography{paper}

\end{document}

